\documentclass[a4paper, reqno]{amsart}

\usepackage{amsmath, amssymb, amsthm}
\usepackage{mathrsfs}
\usepackage{mathtools}
\usepackage{dsfont}
\usepackage{array}
\usepackage{booktabs}
\usepackage{multirow}
\usepackage{microtype}
\usepackage{subcaption}
\usepackage{pgfplots}
\usepackage{listings}
\usepackage{url}
\usepackage[giveninits, maxbibnames=99, backend=biber, eprint=false, doi=false, isbn=false]{biblatex}
\usepackage[hyperfootnotes=false]{hyperref}

\usetikzlibrary{positioning, calc, shapes, backgrounds, decorations}
%\usetikzlibrary{external}
%\tikzexternalize

\lstnewenvironment{algorithm}[1][]{\lstset{
    language={}, mathescape=true, keepspaces=true, columns=fullflexible,
    tabsize=8, basicstyle=\normalfont\normalsize\rmfamily, frame=lines, frameround=ftff, breaklines=true, numbers=none, columns=flexible, aboveskip=\medskipamount, belowcaptionskip=\medskipamount, escapeinside={£}{£},
    keywordstyle=\color{black}\bfseries, backgroundcolor=\color{white},
    keywords={input,output,return,datatype,function,in,if,else,for,while,begin,end,and,repeat}, #1}}{}
    
\tikzset{
  contour/.style={
    decoration={
      name=contour lineto closed,
      contour distance=#1
    },
    decorate}}

\addbibresource{bibliography.bib}

\graphicspath{{figures/}}

\pgfplotsset{compat=newest}
\usetikzlibrary{arrows.meta}

\newcommand\tikzgraphsettings{\tikzset{
	every node/.style = {
		circle, fill,
		inner sep=0pt,
		minimum size=1.7mm,
		outer sep=1pt,
		auto
	},
	every label/.append style={font=\footnotesize},
	every mark/.append style={
		mark size=2.25pt
	},
	empty/.style = {
		font = \small,
		rectangle,
		inner sep=0pt,
		draw=none,
		fill=none
	},
	>=latex}
}

\definecolor{mplblue}{HTML}{1f77b4}
\definecolor{mplorange}{HTML}{ff7f0e}
\definecolor{mplgreen}{HTML}{2ca02c}
\definecolor{mplred}{HTML}{d62728}
\definecolor{mplpurple}{HTML}{9467bd}
\definecolor{mplbrown}{HTML}{8c564b}
\definecolor{mplpink}{HTML}{e377c2}
\definecolor{mplgray}{HTML}{7f7f7f}
\definecolor{mplyellowgreen}{HTML}{bcbd22}
\definecolor{mplcyan}{HTML}{17becf}

\theoremstyle{plain}
\newtheorem{theorem}{Theorem}
\newtheorem*{theorem*}{Theorem}
\newtheorem{proposition}{Proposition}

\theoremstyle{definition}
\newtheorem{definition}{Definition}

\theoremstyle{remark}
\newtheorem{remark}{Remark}

\DeclareMathOperator{\matsum}{sum}
\DeclareMathOperator{\diag}{diag}

\newcommand{\de}{\ensuremath{\mathrm{d}}}

\newcommand\1{\mathds{1}}

\newcommand\restr[2]{{\left.\kern-\nulldelimiterspace#1\mathchoice{\vphantom{\big|}}{}{}{}\right|_{#2}}}

\title{Optimization of geometric hypergraph embedding}

\author{Francesco Zigliotto}

\address{Scuola Normale Superiore. Pisa, PI 56126, Italy.}
\email{francesco.zigliotto@sns.it}

\author{Desmond J. Higham}
\address{School of Mathematics and Maxwell Institute,
University of Edinburgh, EH9 3FD, Scotland, U.K.}
\email{d.j.higham@ed.ac.uk}

\keywords{Euclidean embedding, hypergraph representation, link prediction, random geometric hypergraphs}
\subjclass{%
    05C65, % Hypergraphs
    05C62, % Graph representations (geometric and intersection representations, etc.)
    68R10% % Graph theory (including graph drawing) in computer science
}

\begin{document}

\begin{abstract}
We consider the problem of embedding the nodes of a hypergraph into Euclidean space under the assumption that the interactions arose through closeness to unknown hyperedge centres. In this way, we tackle the inverse problem associated with the generation of geometric random hypergraphs. We propose two new spectral algorithms; both of these exploit the connection between hypergraphs and bipartite graphs. The assumption of an underlying geometric structure allows us to define a concrete measure of success that can be used to optimize the embedding via gradient descent. Synthetic tests show that this approach accurately reveals geometric structure that is planted in the data, and tests on real hypergraphs show that the approach is also useful for the downstream tasks of detecting spurious or missing data and node clustering.
\end{abstract}

\maketitle
\thispagestyle{empty}

\section{Introduction}

Modelling pairwise interactions using a graph has proved to be a valuable technique in many application areas. However, there are numerous scenarios where relationships may involve more than two players. In terms of human interactions, we form groups to socialize, play sport, take public transport, create music, coauthor manuscripts, form committees, and so on. In other settings we may record collections of interacting proteins, reacting chemicals, co-purchased retail products or co-tagged images. In these cases, reducing the data to pairwise interactions, or \emph{flattening}, may involve an unnecessary loss of information, and hence it is attractive to represent the complete picture using a hypergraph \cite{Batt21,BCILLPYP21,benson2016higher,Bick23,torres2020why}. The hypergraph framework has proved useful for modelling the spread of disease and opinions \cite{ABAMPL21,EZG24,higham2021epidemics,heterogeneityHypergraph}, as well as in the study of traffic flow \cite{Yi20}, in the development of document recommendation systems \cite{Zhu16}, and in semiconductor manufacturing \cite{Fuk84,Xia22}.

From an applied mathematics perspective, it is of interest to ask how interactions arise, and whether the underlying mechanisms can be inferred from the interaction data. In \cite{Bart22,HypergraphEmbeddingNature,RandomGeometricHypergraph2025}, extensions to the classic geometric graph \cite{Gil61}setting were proposed. Here, it is assumed that each node has a set of features that may be mapped into Euclidean space in such a way that nearby nodes interact. Distance may be interpreted literally (neighbours tend to shop at the same supermarkets and catch the same trains) or may be a more indirect measure (people with common interests may be more likely to socialize). 

We consider here the inverse problem---given the group-level connectivity structure, can we discover a suitable, geometrically consistent, embedding into Euclidean space? As well as revealing useful information about the data and aiding visualization \cite{Fuk84}, tackling this problem provides a first step  towards downstream clustering and prediction problems, including missing/spurious edge detection \cite{HypergraphEmbeddingNature}.

Hypergraph embedding is related to the more general  problem of hypergraph representation learning. Here \cite[Definition~4.1]{Ant23} we seek a ``latent representation of each node, which captures certain network topological information.'' Spectral methods are among the methods of choice in this area \cite{Ant23}, with various types of Laplacian being considered. In contrast with previous work, our approach, which assumes that the connectivity arose from an underlying geometric model involving (unknown) hyperedge centres, provides a natural quantitative measure of success. We show how this can feed into a descent algorithm that iteratively improves an embedding.

The paper is organized as follows: in Section~\ref{s:preliminaries}, we introduce some notation and describe the well-known equivalence between hypergraphs and bipartite graphs, which was also exploited from a hypergraph modelling context in \cite{Bart22}. We also summarize the idea of spectral embedding via a graph Laplacian. In Section~\ref{s:embedding} we describe our new hypergraph embedding approaches, providing two related algorithms, which we test in Section~\ref{s:application}, along with their application to hyperedge error correction and clustering. Finally, in Section~\ref{s:conclusion}, we summarize this work and suggest directions for future work and possible generalizations.
 
\section{Preliminaries}
\label{s:preliminaries}

\subsection{Graphs and hypergraphs}

\begin{definition}
An (undirected) \emph{graph} is a pair $G=(V,E)$, where $V$ is a set of nodes and $E$ is a set of edges; each edge consists of an unordered pair $\{u,v\}$ of distinct nodes in $V$. A \emph{weighted graph} is a tuple $(V,E,\omega)$, where the edges are weighted by a function $\omega:E\to\mathbb{R}$. Graphs can be considered as particular weighted graphs where all weights are set to $1$. Note that in this work negative weights are also allowed.
\end{definition}

\begin{definition}
A (weighted) graph is \emph{connected} if any two nodes $u$ and $v$ are linked by a sequence of adjacent edges, i.e., there exist $e_1, \dots, e_m\in E$ such that $u\in e_1$, $v\in e_m$ and $e_{k}\cap e_{k+1}\ne\varnothing$ for $k=1,\dots,m-1$. 
\end{definition}

Every (weighted) graph with ordered nodes $V=\{u_1,\dots,u_n\}$ can be associated with a
symmetric \emph{adjacency matrix} $A\in\mathbb{R}^{n\times n}$, defined as
\[
A_{ij} = \begin{cases}
\omega\bigl(\{u_i,u_j\}\bigr) & \text{if $\{u_i,u_j\}\in E$}\\
0 & \text{otherwise.}
\end{cases}
\]
Another useful matrix associated with a graph $G$ is the \emph{Laplacian matrix} \[L=\diag(A\1)-A,\] where $\1$ is the vector of all ones. We recall that $L$ is singular, as $L\1=0$, and $0$ is a simple eigenvalue of $L$ if and only if $G$ is connected \cite{LuxburgUlrike2007Atos}.

In this work we employ a particular class of graphs, called \emph{bipartite graphs}.
\begin{definition}
A (weighted) graph $G=(V,E)$ is \emph{bipartite} if the set of nodes $V$ can be partitioned into two non-empty subsets $V_1$ and $V_2$ such that every edge in $E$ connects a node in $V_1$ to one in $V_2$.
\end{definition}

Graphs can be generalized by extending the concept of pairwise node interactions to interactions among arbitrary groups of nodes.
\begin{definition}
A \emph{hypergraph} is a pair $H=(V,E)$, where $V$ and $E$ are the sets of nodes and hyperedges. Each hyperedge $h\in E$ is of the form $h=\{u_1,\dots,u_k\}$ and can be any subset of $V$.
\end{definition}

A hypergraph with nodes $u_1,\dots,u_n$ and hyperedges $h_1,\dots,h_s$ can be represented via the \emph{incidence matrix} $B\in \mathbb{R}^{n\times s}$, where
\[
B_{ij}=\begin{cases}
1 & \text{if $u_i\in h_j$}\\
0 & \text{otherwise.}
\end{cases}
\]

\goodbreak
There is a natural correspondence between hypergraphs and bipartite graphs.
\begin{definition}
\label{d:correspondence}
Given a hypergraph $H$ with nodes $u_1,\dots,u_n$ and hyperedges $h_1,\dots,h_s$, we define the \emph{incidence graph of $H$} as the
bipartite graph $G$ with nodes $\{u_1,\dots,u_n,h_1,\dots,h_s\}$ such that $\{u_i,h_j\}$ is an edge of $G$ if and only if $u_i$ belongs to $h_j$ in $H$. This correspondence also works in the reverse direction.
\end{definition}
\begin{remark}
Let $B$ be the incidence matrix of a hypergraph $H$ and let $G$ be the incidence graph of $H$. Then, the adjacency matrix of $G$ is
\begin{equation}
A = \begin{bmatrix}
    0 & B\\
    B^\top & 0
\end{bmatrix}\!.
\label{eq:AB}
\end{equation}
\end{remark}
See Figure~\ref{f:bipartite_correspondence} for an example illustrating the concepts discussed in this section.

\begin{figure}
\centering
\begin{subfigure}{.36\textwidth}
    \centering
    \input{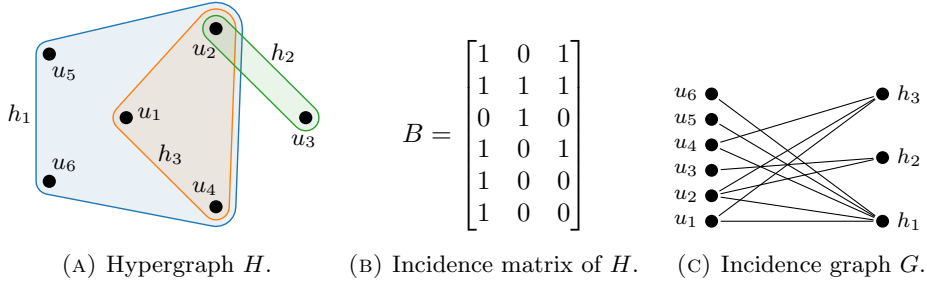}
    \subcaption{Hypergraph $H$.}
\end{subfigure}%
\begin{subfigure}[b]{.32\textwidth}
\[
B=\begin{bmatrix}
  1 & 0 & 1\\
  1 & 1 & 1\\
  0 & 1 & 0\\
  1 & 0 & 1\\
  1 & 0 & 0\\
  1 & 0 & 0\\
\end{bmatrix}
\]
\caption{Incidence matrix of $H$.}
\end{subfigure}%
\begin{subfigure}{.32\textwidth}
    \centering
    \begin{tikzpicture}
\tikzgraphsettings
\begin{scope}[scale=1.5]
\node (0) at (-0.5,-0.5625) [label=left:$u_1$] {};
\node (1) at (-0.5,-0.33749999999999997) [label=left:$u_2$] {};
\node (2) at (-0.5,-0.11249999999999999) [label=left:$u_3$] {};
\node (3) at (-0.5,0.11250000000000009) [label=left:$u_4$] {};
\node (4) at (-0.5,0.3375) [label=left:$u_5$] {};
\node (5) at (-0.5,0.5625) [label=left:$u_6$] {};
\node (6) at (1.0,-0.5625) [label=right:$h_1$] {};
\node (7) at (1.0,0) [label=right:$h_2$] {};
\node (8) at (1.0,0.5625) [label=right:$h_3$] {};
\draw (0)  to (6);
\draw (0)  to (8);
\draw (1)  to (6);
\draw (1)  to (7);
\draw (1)  to (8);
\draw (2)  to (7);
\draw (3)  to (6);
\draw (3)  to (8);
\draw (4)  to (6);
\draw (5)  to (6);
\end{scope}
\end{tikzpicture}
    \subcaption{Incidence graph $G$.}
\end{subfigure}%
\caption{Example of a hypergraph along with its incidence matrix and incidence graph.}
\label{f:bipartite_correspondence}
\end{figure}

\subsection{Random geometric hypergraph model }
\label{s:rgh}

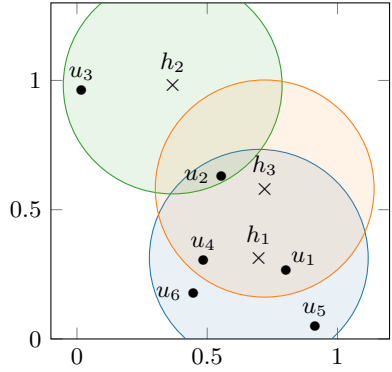
\begin{figure}
    \centering
    \begin{tikzpicture}[label distance=-5pt, every node/.style={font=\small}]
\begin{axis}[axis equal image, width=7cm, xmin=-0.1, xmax=1.2, ymin=0, ymax=1.3]
\draw[fill=mplblue, draw=mplblue, fill opacity=.1] (axis cs:0.6980, 0.3133) circle [radius=0.42];
\draw[fill=mplgreen, draw=mplgreen, fill opacity=.1] (axis cs:0.3675, 0.9810) circle [radius=0.42];
\draw[fill=mplorange, draw=mplorange, fill opacity=.1] (axis cs:0.7207, 0.5818) circle [radius=0.42];
\node at (0.8021, 0.2660) [label={[label distance=-10pt]above right:$u_1$}] {$\bullet$};
\node at (0.5538, 0.6283) [label=left:$u_2$] {$\bullet$};
\node at (0.0174, 0.9605) [label=$u_3$] {$\bullet$};
\node at (0.4850, 0.3017) [label=$u_4$] {$\bullet$};
\node at (0.9133, 0.0480) [label=$u_5$] {$\bullet$};
\node at (0.4468, 0.1759) [label=left:$u_6$] {$\bullet$};
\node at (0.6980, 0.3133) [label=$h_1$] {$\times$};
\node at (0.3675, 0.9810) [label=$h_2$] {$\times$};
\node at (0.7207, 0.5818) [label=$h_3$] {$\times$};
\end{axis}
\end{tikzpicture}
    \caption{Random set of $6$ nodes and $3$ hyperedges in $2$D space used to generate the hypergraph $H$ of Figure~\ref{f:bipartite_correspondence}.}
    \label{fig:random_geometric}
\end{figure}

The connection between hypergraphs and bipartite graphs was leveraged in~\cite{RandomGeometricHypergraph2025} to introduce the random geometric hypergraph model (RGH). With this model, $n+s$ points $u_1,\dots,u_n$ and $h_1,\dots,h_s$ are sampled uniformly in a Euclidean domain of dimension $D$. Given a radius~$r$, the bipartite graph $G$ is then constructed with nodes $u_1,\dots,u_n$ and $h_1,\dots,h_s$, such that $u_i$ is connected to $h_j$ whenever the geometric condition
\begin{equation}
\label{e:distance_condition}
\|u_i-h_j\|_2\le r
\end{equation}
is satisfied. The corresponding hypergraph $H$ can be built as outlined in Definition~\ref{d:correspondence}, so that $u_1,\dots,u_n$ are the nodes of $H$ and $h_1,\dots,h_n$ represent the hyperedges. We  refer to $h_1,\dots,h_n$ in the Euclidean domain as \emph{hyperedge centres}. Figure~\ref{fig:random_geometric} shows an example in $2$D space. This was used to generate the bipartite graph $G$ and hypergraph $H$ of Figure~\ref{f:bipartite_correspondence}.

\subsection{Spectral embedding}
\label{s:spectral_embedding}

We conclude this section by presenting another tool that will be used throughout the paper. Given a graph $G$, the task of embedding the nodes $u_1,\dots,u_n$ into a Euclidean space in a way that best reflects the connectivity structure has been widely studied in the literature \cite{Belkin08,GHKstruggle24,spec_hkk,LuxburgUlrike2007Atos}. We may regard a $D$-dimensional embedding of $G$ as a matrix
\begin{equation}
Y=\begin{bmatrix}
Y(u_1)\\\vdots\\ Y(u_n)
\end{bmatrix}\in\mathbb{R}^{n\times D}.
\label{eq:Ydef}
\end{equation}
The $i$-th row $Y(u_i)$ corresponds to the embedding coordinates of the node $u_i$.

In this work we make use of one of the most commonly adopted techniques, known as \emph{spectral embedding}.
\begin{definition}
\label{d:spectral_embedding}
Let $G$ be a connected (weighted) graph with Laplacian matrix $L$. Let $\lambda_1\le\dots\le\lambda_{n-1}$ be the $n-1$ nonzero eigenvalues of $L$ and let $v^{(1)},\dots,v^{(n-1)}$ be the corresponding eigenvectors (some eigenvalues may be negative if $G$ has negative weights). Then a \emph{spectral embedding with dimension $D$} of the nodes $u_1,\dots,u_n$ of $G$ is given by
\[
Y
=\left[\begin{array}{@{}c|c|c@{}}
v^{(1)} & \dots & v^{(D)}
\end{array}\right],
\]
where the $i$-th row contains the coordinates of the embedding of  $u_i$. 
\end{definition}

The following standard proposition, 
see, for example,
\cite{Belkin08,GHKstruggle24}, shows that under certain assumptions, the spectral embedding is optimal: it may be interpreted as minimizing the distance between nodes connected by positive-weighted edges and maximizing the distance between those connected by negative-weighted edges. (Although the result is typically proved for nonnegative edge weights, the result extends readily to the general case.) 

\begin{proposition}
\label{p:spectral_embedding}
Let $G$ be a connected weighted graph with nodes $u_1,\dots,u_n$, and adjacency matrix~$A$. Then the spectral embedding of $G$ (Definition~\ref{d:spectral_embedding}) minimizes the quantity
\begin{equation}
\label{e:min_problem_spectral}
\sum_{i,j}A_{ij}\left\|Y(u_i)-Y(u_j)\right\|_2^2,
\end{equation}
over all possible $D$-dimensional embeddings $Y\in\mathbb{R}^{n\times D}$.
\end{proposition}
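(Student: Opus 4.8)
The plan is to convert the sum in \eqref{e:min_problem_spectral} into a trace functional of the Laplacian and then apply the variational characterization of eigenvalues.

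First I would rewrite the objective coordinate by coordinate. Writing $y=Y_{\cdot d}$ for the $d$-th column of $Y$, expanding the square gives $\sum_{i,j}A_{ij}(y_i-y_j)^2=2\,y^\top\!\bigl(\diag(A\1)-A\bigr)y=2\,y^\top L y$: the two quadratic terms each collapse to $y^\top\diag(A\1)y$ using the symmetry of $A$, while the cross term yields $-2\,y^\top A y$. Summing over the $D$ columns produces the identity
\[
\sum_{i,j}A_{ij}\bigl\|Y(u_i)-Y(u_j)\bigr\|_2^2=2\,\mathrm{tr}\!\bigl(Y^\top L Y\bigr),
\]
so minimizing \eqref{e:min_problem_spectral} is equivalent to minimizing $\mathrm{tr}(Y^\top L Y)$.

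Second, I would make the admissible set precise, since as literally stated the quantity is driven to $0$ (or to $-\infty$ in the presence of negative weights) by rescaling $Y$. The minimization is therefore understood over embeddings whose columns are orthonormal and orthogonal to $\1$—exactly the normalization built into Definition~\ref{d:spectral_embedding}, where the $v^{(k)}$ are unit eigenvectors and $\1$ spans the kernel of $L$. Because $L$ is symmetric with $L\1=0$, the subspace $\1^\perp$ is $L$-invariant, and connectivity (through the cited fact that $0$ is a simple eigenvalue) guarantees that $\1$ is the only direction removed, leaving precisely the eigenvalues $\lambda_1\le\dots\le\lambda_{n-1}$ on $\1^\perp$.

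Third, I would invoke the Ky Fan / Courant–Fischer trace-minimization theorem applied to the restriction of $L$ to the $(n-1)$-dimensional invariant subspace $\1^\perp$: the minimum of $\mathrm{tr}(Y^\top L Y)$ over $Y$ with orthonormal columns lying in $\1^\perp$ equals the sum $\lambda_1+\dots+\lambda_D$ of the $D$ smallest of these eigenvalues, and is attained precisely when the columns of $Y$ span the corresponding eigenspaces. This optimal choice is exactly the spectral embedding of Definition~\ref{d:spectral_embedding}, which proves the claim.

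The main obstacle is twofold: pinning down the implicit orthonormality constraint so that the problem is well posed, and confirming that the trace bound survives the negative-weight case. The latter is clean, since the spectral theorem and the variational characterization of sums of eigenvalues hold verbatim for any real symmetric matrix, whether or not $L$ is positive semidefinite; only the signs and ordering of the $\lambda_k$ change, not the argument.
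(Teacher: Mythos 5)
Your proof is correct and is essentially the same argument that the paper itself defers to by citation (it gives no inline proof): rewrite the objective as $2\,\mathrm{tr}\bigl(Y^\top L Y\bigr)$ and apply Ky Fan/Courant--Fischer trace minimization to the restriction of $L$ to the invariant subspace $\1^\perp$, where connectivity ensures the eigenvalues are exactly $\lambda_1\le\dots\le\lambda_{n-1}$. You were also right to make explicit the orthonormality and mean-zero constraints without which the minimization in \eqref{e:min_problem_spectral} is ill posed --- the paper leaves these implicit in Definition~\ref{d:spectral_embedding} and in the remark following the proposition --- and your observation that the Ky Fan bound requires only symmetry, not positive semidefiniteness, is precisely what justifies the paper's parenthetical claim that the result ``extends readily'' to negative weights.
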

\begin{remark}
The column normalization and sum conditions guarantee that the embedding is centred around $0$ and contained in $[-1,1]^D$.
\end{remark}
Some variants of spectral embedding have been proposed for mixed-sign weighted graphs. In particular, the use of the signed Laplacian \[\bar L= \diag(|A|\1)-A,\] where $|A|$ denotes the entry-wise absolute value of $A$, has been suggested as an alternative to the standard Laplacian \cite{SignedLaplacian2010}. However, the literature generally agrees that the standard Laplacian still yields better results in practice \cite{SignedLF2017,SignedSpectralEmbedding2021}. 
Figure~\ref{f:small_signed_graph_embedding} shows an example of a small signed graph, with weights of $+1$ or $-1$, and the corresponding $2$D spectral embedding.

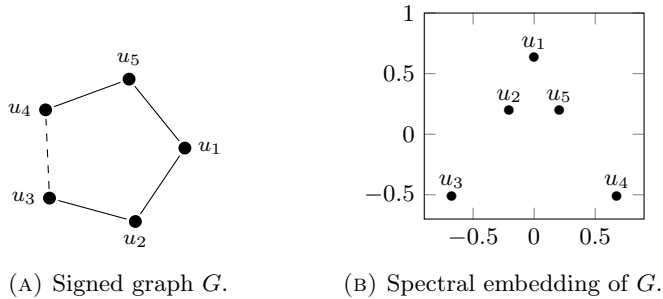
\begin{figure}
\centering
\begin{subfigure}{.4\textwidth}
    \centering
    \begin{tikzpicture}
\tikzgraphsettings
\begin{scope}[scale=1]
\node (0) at (-0.7813481930974399,-0.6200895303806958) [label=left:$u_3$] {};
\node (1) at (-0.8326776278720028,0.5501006515584408) [label=left:$u_4$] {};
\node (2) at (0.26559360983127817,0.9568229452772272) [label={[label distance=-2pt]above:$u_5$}] {};
\node (3) at (1.0,0.043537114763742994) [label=right:$u_1$] {};
\node (4) at (0.3484322111381652,-0.9303711812187152) [label={[label distance=-2pt]below:$u_2$}] {};
\draw[dashed] (0) to (1);
\draw (0)  to (4);
\draw (1)  to (2);
\draw (2)  to (3);
\draw (3)  to (4);
\end{scope}
\end{tikzpicture}
    \subcaption{Signed graph $G$.}
\end{subfigure}%
\begin{subfigure}{.4\textwidth}
    \centering
    \begin{tikzpicture}[label distance=-6pt, every node/.style={font=\small}]
\begin{axis}[axis equal image, width=5cm, xmin=-0.9, xmax=.9, ymin=-0.7, ymax=1]
\node at (-0.0000, 0.6325) [label=$u_1$] {$\bullet$};
\node at (-0.2049, 0.1954) [label=$u_2$] {$\bullet$};
\node at (-0.6768, -0.5117) [label=$u_3$] {$\bullet$};
\node at (0.6768, -0.5117) [label=$u_4$] {$\bullet$};
\node at (0.2049, 0.1954) [label=$u_5$] {$\bullet$};
\end{axis}
\end{tikzpicture}
    \subcaption{Spectral embedding of $G$.}
\end{subfigure}%
\caption{Example of of a $2$D spectral embedding of a signed graph $G$, where the solid edges have weight $1$ and the dashed edge has weight $-1$.}
\label{f:small_signed_graph_embedding}
\end{figure}

\goodbreak

\section{Hypergraph embedding}
\label{s:embedding}

We now formalize the main aim of this work---embedding a hypergraph into a given dimension $D$ under the assumption that the connectivity arose from a geometric model.

\subsection{Problem statement}
\label{s:problem}

Let $H^{(0)}$ be a given hypergraph, with nodes $u_1,\dots,u_n$ and hyperedges $h_1,\dots,h_s$, and let $G^{(0)}$ be its incidence graph, with $N=n+s$ nodes (see Definition~\ref{d:correspondence}). We are looking for both a $D$-dimensional embedding $Y$ (\ref{eq:Ydef}) of the nodes of $G^{(0)}$, and a radius $r$, such that $G^{(0)}$ can be recovered by applying the geometric hypergraph model to the points of the embedding (see Section~\ref{s:rgh}). Specifically, to assess the quality of an embedding $Y$, we construct the bipartite graph $\tilde G(Y,r)$ with $N = n + s$ nodes such that an edge connects $u_i$ to $h_j$ whenever $d_{ij}(Y)\le r$, where
\begin{equation}
\label{e:d_Y}
d_{ij}(Y)=\|Y(u_i)-Y(h_j)\|_2,
\qquad 1 \le i \le n, \quad 1 \le j \le s, 
\end{equation}
denotes the Euclidean distance between the embeddings of $u_i$ and $h_j$. The goal is to choose $Y$ and $r$ such that the recovered bipartite graph $\tilde G(Y,r)$ is equal to the original bipartite graph $G^{(0)}$. For simplicity, we will sometimes simply write~$\tilde G$ instead of $\tilde G(Y,r)$.

The problem may be impossible to solve exactly, so we seek approximate solutions. To quantify the goodness of a candidate embedding $Y$ and radius $r$, we define the loss function
\[
\mathcal{L}(Y,r)=\dfrac{\bigl\|\tilde B(Y,r)-B^{(0)}\bigr\|^2_{\mathrm{fro}}}{\bigl\|B^{(0)}\bigr\|^2_{\mathrm{fro}}},
\]
where $B^{(0)}$ is the incidence matrix of $H^{(0)}$ and
\begin{equation}
\label{e:step_function}
\tilde B_{ij}(Y,r)=
\begin{cases}
    1 & \text{if $d_{ij}(Y)\le r$}\\
    0 & \text{otherwise,}
\end{cases}
\end{equation}
is the incidence matrix of the hypergraph $\tilde H$ associated with $\tilde G$. 
Here,
$\| \cdot \|_{\mathrm{fro}}$ denotes the Frobenius matrix norm. 
Intuitively, $\mathcal{L}(Y,r)$ counts the relative number of mistakes in the reconstructed graph $\tilde G$; that is, the number of missing or spurious nodes in $\tilde H$'s hyperedges. We can thus formulate our problem as minimizing the function $\mathcal{L}(Y,r)$ over all possible $Y\in\mathbb{R}^{N\times D}$ and $r>0$.

\begin{remark}
\label{r:rgh_generated}
When $H^{(0)}$ is generated using the RGH model, choosing the original embedding yields an exact solution, since $B^{(0)}=\tilde B(Y,r)$ by definition.
\end{remark}

\subsection{Spectral embedding of the bipartite graph}
\label{subsec:standard}

A simple approach for the embedding problem is to set $Y$ as a spectral embedding (see Section~\ref{s:spectral_embedding}) of $A^{(0)}$, the adjacency matrix of the bipartite graph $G^{(0)}$. Proposition~\ref{p:spectral_embedding} shows that $u_i$ and $h_j$ will typically be closely placed whenever they are connected in $G^{(0)}$.
However, this method treats the nodes and 
the (artificially introduced) hyperedge centres
equally; whereas 
our ultimate goal is to reproduce the hyperedges, with the centres playing only an accompanying role.
Nevertheless, in some cases, particularly when $G^{(0)}$ is small, this direct approach may yield sufficiently accurate solutions. In Figure~\ref{f:small_spectral_embedding}, we apply this technique to the graph $G$ from Figure~\ref{f:bipartite_correspondence}. Since $G$ was generated by the RGH model (Figure~\ref{fig:random_geometric}), we know that, in principle, it can be embedded and recovered exactly (see Remark~\ref{r:rgh_generated}). In this case, the spectral embedding of $G$, combined with an appropriate choice of $r$, results in perfect reconstruction.

More generally, we found that the limited flexibility of this method may lead to poor results. In the following sections, we present two alternative approaches aimed at improving the performance of basic spectral embedding.

\begin{figure}
\centering
\begin{tikzpicture}[label distance=-6pt, every node/.style={font=\small}]
\begin{axis}[axis equal image, width=7cm, xmin=-0.7, xmax=.8, ymin=-0.8, ymax=0.8]
\draw[fill=mplblue, draw=mplblue, fill opacity=0.1] (axis cs:-0.1875, 0.0413) circle [radius=0.5];
\draw[fill=mplgreen, draw=mplgreen, fill opacity=0.1] (axis cs:0.4652, 0.0151) circle [radius=0.5];
\draw[fill=mplorange, draw=mplorange, fill opacity=0.1] (axis cs:-0.1136, -0.4126) circle [radius=0.5];
\node at (0.0618, -0.1712) [label=$u_2$] {$\bullet$};
\node at (0.7088, 0.1875) [label=$u_3$] {$\bullet$};
\node at (-0.1818, -0.3436) [label={$u_1,u_4$}] {$\bullet$};
\node at (-0.2856, 0.5135) [label={$u_5,u_6$}] {$\bullet$};
\node at (-0.1875, 0.0413) [label=$h_1$] {$\times$};
\node at (0.4652, 0.0151) [label=$h_2$] {$\times$};
\node at (-0.1136, -0.4126) [label=right:$h_3$] {$\times$};
\end{axis}
\end{tikzpicture}
\caption{Spectral embedding of the nodes of $G$ from Figure~\ref{f:bipartite_correspondence}. By choosing $r=0.5$ we recover the original graph exactly.}
\label{f:small_spectral_embedding}
\end{figure}
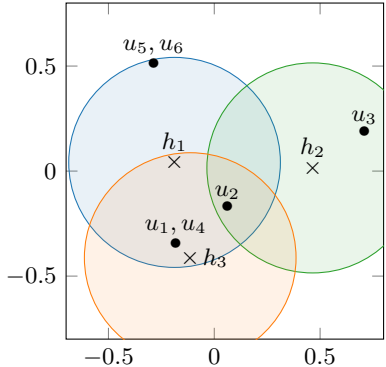

\subsection{Embedding optimization via weight modification}\label{subsec:GDSE}

The spectral embedding used in subsection~\ref{subsec:standard} offers useful properties, including the orthogonality constraint on the columns of $Y$ and a principled interpretation as the solution to the minimization problem~\eqref{e:min_problem_spectral}. The approach developed in this subsection retains these properties, but introduces greater flexibility by allowing modifications to the weights of $G^{(0)}$.

We begin by introducing some notation. Given a weighted bipartite graph $G$ with adjacency matrix
$A$ of the form (\ref{eq:AB}),
we let $Y^{\mathrm{sp}}(B)$
denote a spectral embedding (in dimension $D$) of $G$.

Our approach is now to minimize the loss function $\mathcal{L}(Y,r)$ with the extra constraint 
\[
Y = Y^{\mathrm{sp}}(B),
\]
so that the embedding~$Y$ must be the spectral embedding of the weighted bipartite graph associated with some matrix~$B\in\mathbb{R}^{n\times s}$. While the standard spectral embedding method described in 
subsection~\ref{subsec:standard}
fixes $B=B^{(0)}$ (the incidence matrix of $H^{(0)}$), here we instead allow for perturbations of $B^{(0)}$.
Intuitively, modifying the weights of $B^{(0)}$ to adjust the node embedding
allows us to search over a plausible, but constrained,
set of $Y$.
We will allow the weights to take arbitrary real values, including both positive and negative numbers.

We will search for an optimal $B$ using gradient descent. To make the loss function  differentiable with respect to $B$, we modify the way that the reconstructed incidence matrix $\tilde B(Y,r)$ is built. Specifically, we approximate the discontinuous step function in~\eqref{e:step_function} with the differentiable function
\begin{equation}
\label{e:logfun}
f_\tau(x,r)=\dfrac{1}{1+e^{\tau^2(x^2-r^2)}},
\end{equation}
and we set
\begin{equation}
\label{e:approx_B}
[\tilde B_\tau]_{ij}(Y,r)=f_\tau\bigl(d_{ij}(Y),r\bigr).
\end{equation}
The corresponding reconstructed weighted bipartite graph $\tilde G_\tau$ is defined accordingly, where each~$u_i$ is connected to every~$h_j$ in $\tilde G_\tau$, with the weight of the edge  $\{u_i, h_j\}$ representing how close $u_i$ and $h_j$ are. As $\tau$ increases, $f_\tau$ better approximates the step function (see Figure~\ref{f:logfun}) and the weights are encouraged towards $0$ and $1$.

\begin{figure}
    \centering
    \begin{tikzpicture}
\begin{axis}[
    xlabel={$x$},
    ylabel={$f_\tau(x, r)$},
    ylabel style = {rotate = -90},
    width=6.5cm,
    height=4cm,
    ymin=-.2,
    ymax=1.2,
    xmin=0,
    xmax=1,
    xticklabel style={
        font=\small,
    },
    yticklabel style={
        font=\small,
        /pgf/number format/fixed,
    },
    legend pos = outer north east,
]
\addplot [very thick, dashed, color=mplgreen] table {figures/logfun_5.txt};
\addplot [very thick, densely dashed, color=mplorange] table {figures/logfun_10.txt};
\addplot [very thick, color=mplblue] table {figures/logfun_20.txt};
\addlegendentry{$\tau=5$}
\addlegendentry{$\tau=10$}
\addlegendentry{$\tau=20$}
\end{axis}
\end{tikzpicture}
    \caption{The function $f_\tau(x, r)$
    in (\ref{e:logfun}) 
    with $r=0.5$ and different values of $\tau$. As $\tau$ increases, $f_\tau(x, r)$ becomes a better approximation of a step function centred at $r$.}
    \label{f:logfun}
\end{figure}
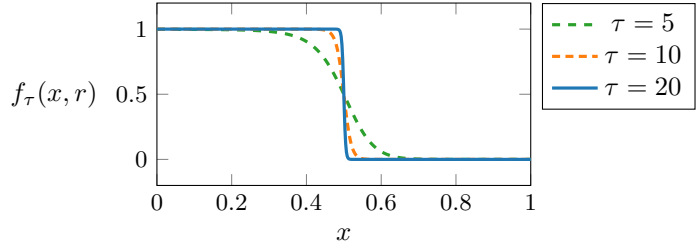

Under appropriate assumptions, the corresponding approximated loss function
\begin{equation}
\label{e:approx_loss}
\mathcal{L}_\tau(Y,r)=\dfrac{\bigl\|\tilde B_\tau\bigl(Y,r\bigr)-B^{(0)}\bigr\|^2_{\mathrm{fro}}}{\bigl\|B^{(0)}\bigr\|^2_{\mathrm{fro}}},
\end{equation}
with $Y=Y^{\mathrm{sp}}(B)$,
is differentiable with respect to $B$, $r$ and $\tau$, as stated by the following theorem.
\begin{theorem}
\label{t:gdse}
Let $B\in\mathbb{R}^{n\times s}$ be such that the Laplacian $L$ of the matrix
\[
A = \begin{bmatrix}
0 & B\\
B^\top & 0
\end{bmatrix}
\in \mathbb{R}^{N \times N} 
\] has distinct eigenvalues $\lambda_0=0$ and $\lambda_1<\dots<\lambda_{N-1}$ and let $v^{(0)},\dots,v^{(N-1)}$ be the corresponding orthonormal eigenvectors. Let $Y=Y^{\mathrm{sp}}(B)$ be the $D$-dimensional spectral embedding of the bipartite graph $G$ with  adjacency matrix $A$, so that we have $Y_{ik}=v_i^{(k)}$ for $i=1,\dots,N$ and $k=1,\dots, D$. Let $d_{ij}$, $f_\tau$, $\tilde B_\tau$ and $\mathcal{L}_\tau$ be defined as in~\eqref{e:d_Y}, \eqref{e:logfun}, \eqref{e:approx_B} and~\eqref{e:approx_loss}.
Then $\mathcal{L}_\tau(Y^{\mathrm{sp}}(B),r)$ is differentiable with respect to $B$, $r$ and $\tau$, and we have
\begin{equation}
\label{e:dloss}
\begin{split}
\dfrac{\partial\mathcal{L}_\tau\bigl(Y^{\mathrm{sp}}(B),r\bigr)}{\partial B}
&=\dfrac{1}{\bigl\|B^{(0)}\bigr\|^2_{\mathrm{fro}}}\sum_{k=1}^D\sum_{h=D+1}^{N}\dfrac{\matsum\bigl(S\odot V^{(k)} \odot V^{(h)}\bigr)}{\lambda_k-\lambda_h}\:V^{(k)} \odot V^{(h)}\\
\dfrac{\partial\mathcal{L}_\tau\bigl(Y^{\mathrm{sp}}(B),r\bigr)}{\partial r}&=
\dfrac{1}{\bigl\|B^{(0)}\bigr\|^2_{\mathrm{fro}}}\sum_{i=1}^n\sum_{j=1}^{s}2\left([\tilde B_\tau]_{ij}-B^{(0)}_{ij}\right)\dfrac{\partial f_\tau}{\partial r}(d_{ij}, r)\\
\dfrac{\partial\mathcal{L}_\tau\bigl(Y^{\mathrm{sp}}(B),r\bigr)}{\partial \tau}&=
\dfrac{1}{\bigl\|B^{(0)}\bigr\|^2_{\mathrm{fro}}}\sum_{i=1}^n\sum_{j=1}^{s}2\left([\tilde B_\tau]_{ij}-B^{(0)}_{ij}\right)\dfrac{\partial f_\tau}{\partial \tau}(d_{ij}, r)
\end{split}
\end{equation}
where $V^{(k)}_{ij}=v_i^{(k)}-v_j^{(k)}$ and
\[
S_{ij} = \dfrac{2\Bigl([\tilde B_\tau]_{ij}-B_{ij}^{(0)}\Bigr) \dfrac{\partial f_\tau}{\partial d_{ij}}(d_{ij}, r)}{d_{ij}}.
\]
\end{theorem}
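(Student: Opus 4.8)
The plan is to handle the three partial derivatives separately, dispatching $\partial_r$ and $\partial_\tau$ quickly and concentrating on $\partial_B$, which carries all the difficulty. Since $Y=Y^{\mathrm{sp}}(B)$ depends only on $B$, the distances $d_{ij}$ are constant in $r$ and $\tau$, so writing $\mathcal{L}_\tau=\|B^{(0)}\|_{\mathrm{fro}}^{-2}\sum_{i,j}(f_\tau(d_{ij},r)-B^{(0)}_{ij})^2$ and differentiating the squared terms directly yields the two stated expressions. The one point to verify for differentiability is smoothness of $f_\tau(d_{ij},r)$ in the underlying variables. This is where the specific form of $f_\tau$ pays off: because $f_\tau(x,r)$ depends on $x$ only through $x^2$, the composite $f_\tau(d_{ij},r)$ is a smooth function of $d_{ij}^2=\sum_k (Y_{ik}-Y_{n+j,k})^2$, a polynomial in the entries of $Y$. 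This sidesteps the non-differentiability of the Euclidean norm at the origin, and moreover shows that the factor $d_{ij}^{-1}$ in $S_{ij}$ is matched by a factor $d_{ij}$ hidden in $\partial f_\tau/\partial d_{ij}$, so $S$ is well defined even when some $d_{ij}=0$. I would also note that $\mathcal{L}_\tau$ sees $Y$ only through the $d_{ij}^2$, which are invariant under the sign ambiguity of each eigenvector, so the gradient below is unambiguous.

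For $\partial_B$ I would follow the chain $B\mapsto L\mapsto\{v^{(k)}\}\mapsto Y\mapsto\{d_{ij}\}\mapsto\mathcal{L}_\tau$, in which the only non-elementary link is $B\mapsto\{v^{(k)}\}$. Here the hypothesis that $L$ has distinct eigenvalues is essential: it guarantees, by standard first-order eigenvector perturbation theory, that each $v^{(k)}$ is a differentiable function of the entries of $L$ (hence of $B$), with
\[
\frac{\partial v^{(k)}}{\partial B_{pq}}=\sum_{h\ne k}\frac{(v^{(h)})^\top\frac{\partial L}{\partial B_{pq}}v^{(k)}}{\lambda_k-\lambda_h}\,v^{(h)}.
\]
Composing this with the smooth maps on either side establishes differentiability of $\mathcal{L}_\tau$ in $B$; the remaining task is to evaluate the gradient in closed form.

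The crux is the bilinear form $(v^{(h)})^\top(\partial L/\partial B_{pq})v^{(k)}$. Differentiating $L=\diag(A\1)-A$ with $A$ in block form shows that $\partial A/\partial B_{pq}$ is nonzero only in entries $(p,n+q)$ and $(n+q,p)$, and $\partial\diag(A\1)/\partial B_{pq}$ only in diagonal positions $p$ and $n+q$. A short calculation then collapses the bilinear form to the product
\[
(v^{(h)})^\top\frac{\partial L}{\partial B_{pq}}v^{(k)}=\bigl(v_p^{(h)}-v_{n+q}^{(h)}\bigr)\bigl(v_p^{(k)}-v_{n+q}^{(k)}\bigr)=V^{(h)}_{pq}\,V^{(k)}_{pq},
\]
which is the origin of the Hadamard structure $V^{(k)}\odot V^{(h)}$ in the final formula. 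I expect this collapse, together with keeping two index conventions straight (node indices $1,\dots,n$ versus hyperedge indices $n+1,\dots,N$, and the embedding dimensions $1,\dots,D$ versus the full eigenbasis), to be the main bookkeeping obstacle.

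Finally I would assemble the chain rule and simplify. The distance gradient $\partial d_{ij}/\partial Y$ supplies a factor $V^{(k)}_{ij}/d_{ij}$, which combines with $\partial\mathcal{L}_\tau/\partial d_{ij}$ to produce $S_{ij}$; the eigenvector derivative supplies $V^{(k)}_{pq}V^{(h)}_{pq}/(\lambda_k-\lambda_h)$ and a further factor $V^{(h)}_{ij}$. Summing over $i,j$ turns the $V^{(k)}_{ij}V^{(h)}_{ij}$ into the scalar $\matsum(S\odot V^{(k)}\odot V^{(h)})$ and leaves $V^{(k)}\odot V^{(h)}$ carrying the $(p,q)$ dependence, giving $\sum_{k=1}^D\sum_{h\ne k}$ of the stated summand. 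The last step reduces the inner sum from $h\ne k$ to $h>D$: the $h=0$ term vanishes because $v^{(0)}\propto\1$ forces $V^{(0)}=0$, while the terms with $h,k$ both in $\{1,\dots,D\}$ cancel in pairs, the summand being symmetric under $k\leftrightarrow h$ except for the antisymmetric denominator $\lambda_k-\lambda_h$. What survives is exactly the contribution of the eigenvectors outside the embedding, in agreement with the stated formula.
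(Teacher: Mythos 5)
Your proposal is correct and follows essentially the same route as the paper's proof: first-order eigenvector perturbation theory under the distinct-eigenvalue hypothesis, the collapse of $(v^{(h)})^\top(\partial L/\partial B_{pq})v^{(k)}$ to $V^{(h)}_{pq}V^{(k)}_{pq}$, chain-rule assembly producing $S$ and the Hadamard products, and the symmetric-numerator/antisymmetric-denominator cancellation that reduces the inner sum from $h\ne k$ to the eigenvectors outside the embedding. Your added observations---that smoothness should be routed through $d_{ij}^2$ so the case $d_{ij}=0$ is harmless, and that the loss is invariant under eigenvector sign flips---are correct refinements of points the paper leaves implicit.
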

\begin{proof}
From first-order perturbation theory \cite{Wilkinson1988,Meyer1988} we have
\[
\dfrac{\partial v_i^{(k)}}{\partial B_{IJ}}=\sum_{\substack{h=0\\[.5ex]h\ne k}}^{N-1}\dfrac{\left\langle\dfrac{\partial L}{\partial B_{IJ}}v^{(k)},\,v^{(h)}\right\rangle v_i^{(h)}}{\lambda_k-\lambda_h}
\]
where $L$ is the Laplacian matrix
\[
L = 
\begin{bmatrix}
\diag(B\1) & -B\\
-B^\top & \diag(B^T\1)
\end{bmatrix}\!.
\]
Since
\[
\dfrac{\partial L}{\partial B_{IJ}}=e_Ie_I^T+e_Je_J^T-e_Ie_J^T-e_Je_I^T=(e_I-e_J)(e_I^T-e_J^T),
\]
we obtain
\[
\dfrac{\partial v_i^{(k)}}{\partial B_{IJ}}=\sum_{\substack{h=0\\[.5ex]h\ne k}}^{N-1}\dfrac{\Bigl(v_I^{(k)}-v_J^{(k)}\Bigr)\Bigl(v_I^{(h)}-v_J^{(h)}\Bigr)v^{(h)}}{\lambda_k-\lambda_h}=
\sum_{\substack{h=1\\[.5ex]h\ne k}}^{N-1}\dfrac{V^{(k)}_{IJ}V^{(h)}_{IJ}v_i^{(h)}}{\lambda_k-\lambda_h},
\]
where the term with $h=0$ vanishes. It follows that
\begin{equation}
\label{e:ded}
\begin{split}
\dfrac{\partial d_{ij}}{\partial B_{IJ}}
&=\dfrac{1}{2d_{ij}}\sum_{k=1}^D2\Bigl(v_i^{(k)}-v_j^{(k)}\Bigr)\sum_{\substack{h=1\\[.5ex]h\ne k}}^{N-1}\dfrac{V^{(k)}_{IJ}V^{(h)}_{IJ}\Bigl(v_i^{(h)}-v_j^{(h)}\Bigr)}{\lambda_k-\lambda_h}\\
&=\dfrac{1}{d_{ij}}\sum_{k=1}^D\sum_{\substack{h=1\\[.5ex]h\ne k}}^{N-1}\dfrac{V^{(k)}_{IJ}V^{(h)}_{IJ}V^{(k)}_{ij}V^{(h)}_{ij}}{\lambda_k-\lambda_h}
=\dfrac{1}{d_{ij}}\sum_{k=1}^D\sum_{h=D+1}^{N-1}\dfrac{V^{(k)}_{IJ}V^{(h)}_{IJ}V^{(k)}_{ij}V^{(h)}_{ij}}{\lambda_k-\lambda_h},  
\end{split}
\end{equation}
where the last equality holds because the double sum is antisymmetric in $k$ and $h$. Finally, we have
\[
\begin{split}
\dfrac{\partial\mathcal{L}_\tau\bigl(Y^{\mathrm{sp}}(B),r\bigr)}{\partial B_{IJ}}
&=\dfrac{1}{\bigl\|B^{(0)}\bigr\|^2_{\mathrm{fro}}}\sum_{i=1}^{n}\sum_{j=1}^{s}2\left([\tilde B_\tau]_{ij}-B^{(0)}_{ij}\right)\dfrac{\partial f_\tau}{\partial d_{ij}}(d_{ij}, r)\dfrac{\partial d_{ij}}{\partial B_{IJ}}\\
&=\dfrac{1}{\bigl\|B^{(0)}\bigr\|^2_{\mathrm{fro}}}\sum_{k=1}^D\sum_{h=D+1}^n\left(\sum_{i=1}^{n}\sum_{j=1}^{s}S_{ij}V^{(k)}_{ij}V^{(h)}_{ij}\right)\dfrac{V^{(k)}_{IJ}V^{(h)}_{IJ}}{\lambda_k-\lambda_h}.
\end{split}
\]
The computation of the derivatives with respect to $r$ and $\tau$ is straightforward.
\end{proof}
\begin{remark}
If the condition about distinct eigenvalues is not met, it sufficient to apply a small random perturbation to $B$ and obtain distinct eigenvalues almost surely.
\end{remark}
\begin{remark}
This derivation provides some direct insight about how the spectral embedding changes when the weights of the matrix $B$ are perturbed. For instance, if the weight $B_{IJ}$ is increased by a small amount, then the distance between the embedding of~$u_I$ and~$h_J$ decreases, as expected. This follows directly from~\eqref{e:ded}:
\[
\dfrac{\partial d_{IJ}}{\partial B_{IJ}}
=\dfrac{1}{d_{ij}}\sum_{k=1}^D\sum_{h=D+1}^n\dfrac{\Bigl(V^{(k)}_{IJ}V^{(h)}_{IJ}\Bigr)^2}{\underbrace{\lambda_k-\lambda_h}_{<0}}<0.
\]
\end{remark}

Theorem~\ref{t:gdse} shows how $B$, $r$, and $\tau$ can be updated iteratively via gradient descent (GD) on the loss function. The corresponding algorithm, which we refer to as \emph{Gradient-Descent Spectral Embedding} (GDSE), is summarized in Listing~\ref{a:gdse}.

\begin{algorithm}[caption={Gradient-Descent Spectral Embedding (GDSE)}, label=a:gdse]
£Given hypergraph $H^{(0)}$ with incidence matrix $B^{(0)}$, and given $r^{(0)}, \tau^{(0)}$:£

£$B, r, \tau\gets B^{(0)}, r^{(0)}, \tau^{(0)}$£
£$Y \gets Y^{\mathrm{sp}}\left(B\right) $£

for £$k=0,\dots,\langle\text{\emph{stopping criterion}}\rangle$£
    £$\de B, \de r, \de\tau \gets \nabla_{B,r,\tau}\mathcal{L}_\tau\left(Y,r\right)$£
    £$B\gets B-\gamma_B\de B$£
    £$r\gets r-\gamma_r\de r$£
    £$\tau\gets \tau-\gamma_\tau\de\tau$£
    £$Y \gets Y^{\mathrm{sp}}\left(B\right)$£
end

return £$Y,\,r$£
\end{algorithm}
The incidence matrix $B^{(0)}$ of $H^{(0)}$ is chosen as the starting point for $B$, while setting $r^{(0)}\approx0.1$ and $\tau^{(0)}\approx 5$ or $10$ appears to work well. The algorithm iteratively updates $B$, $r$ and $\tau$ via GD, using some fixed learning rates $\gamma_B, \gamma_r$, and $\gamma_\tau$.

An example run with $\gamma_B = 1$, $\gamma_r = 0.001$, and $\gamma_\tau = 1$ is shown in Figure~\ref{f:reconstruction_art80_error_r_k}, where GDSE is applied to the random geometric hypergraph $H^{(0)}$ depicted in Figure~\ref{f:hypergraph_art80}. 
The starting values here are $r^{(0)}=0.1$ and $\tau^{(0)}=5$. 
The embedding dimension is set to $D = 3$, matching the dimensionality used to generate $H^{(0)}$. After about $800$ iterations, the reconstruction error $\mathcal{L}$ stabilizes at $0$,
resulting in a perfect reconstruction of the original hypergraph. More examples are presented in Section~\ref{s:application}.

\begin{figure}
\begin{subfigure}[b]{.47\textwidth}
\raisebox{.5cm}{\includegraphics[width=.9\linewidth, angle=90]{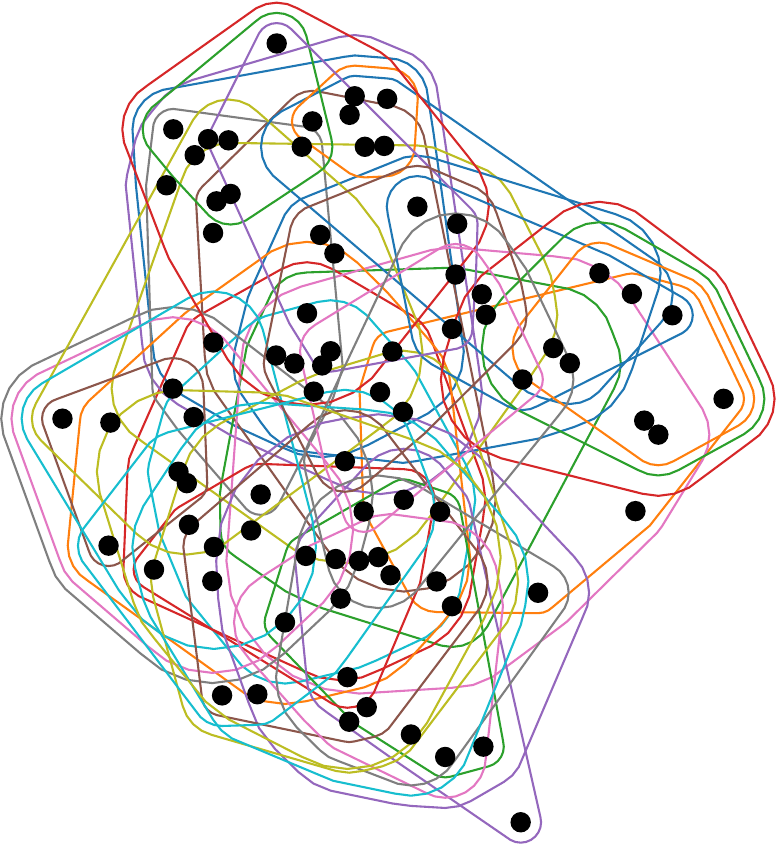}}%
\caption{A random geometric hypergraph $H^{(0)}$, with $80$ nodes and $40$ hyperedges generated from a $3$-dimensional Euclidean space
(projected to 2D for visualisation).}
\label{f:hypergraph_art80}
\end{subfigure}\hfill
\begin{subfigure}[b]{.5\textwidth}
\raggedleft
\begin{tikzpicture}
\begin{semilogyaxis}[
    width=6.5cm,
    height=4.4cm,
    xticklabel style={
        font=\small,
    },
    xticklabels=\empty,
    yticklabel style={
        font=\small,
        /pgf/number format/fixed,
    },
    legend style={font=\footnotesize},
]
\addplot [very thick, color=mplblue] table [y expr=\thisrowno{0}, x expr=\coordindex] {figures/reconstruction_art80_GDSE_loss_list.txt};
\addplot [very thick, color=mplorange] table [y expr=\thisrowno{0}, x expr=\coordindex] {figures/reconstruction_art80_GDSE_error_list.txt};
\addlegendentry{$\mathcal{L}_\tau(Y,r)$}
\addlegendentry{$\mathcal{L}(Y,r)$}
\end{semilogyaxis}
\end{tikzpicture}
\begin{tikzpicture}
\begin{axis}[
    ylabel style = {rotate = -90},
    width=6.5cm,
    height=2.8cm,
    xticklabel style={
        font=\small,
    },
    xticklabels = \empty,
    yticklabel style={
        font=\small,
        /pgf/number format/fixed,
    },
    legend style={at={(.98,0.5)}, anchor=east, font=\footnotesize},
]
\addplot [very thick, color=mplgreen] table [y expr=\thisrowno{0}, x expr=\coordindex] {figures/reconstruction_art80_GDSE_r_list.txt};
\addlegendentry{$r$}
\end{axis}
\end{tikzpicture}\\
\begin{tikzpicture}
\begin{axis}[
    xlabel={iter.},
    ylabel style = {rotate = -90},
    width=6.5cm,
    height=2.8cm,
    xticklabel style={
        font=\small,
    },
    yticklabel style={
        font=\small,
        /pgf/number format/fixed,
    },
    legend style={at={(.98,0.5)}, anchor=east, font=\footnotesize},
]
\addplot [very thick, color=mplred] table [y expr=\thisrowno{0}, x expr=\coordindex] {figures/reconstruction_art80_GDSE_k_list.txt};
\addlegendentry{$\tau$}
\end{axis}
\end{tikzpicture}%
\caption{Progress of $\mathcal{L}_\tau(Y,r)$, $\mathcal{L}(Y,r)$, $r$ and $\tau$.}
\label{f:reconstruction_art80_error_r_k}
\end{subfigure}%
\caption{Execution of the GDSE algorithm on $H^{(0)}$. Perfect reconstruction is achieved after around $800$ iterations.}
\label{f:reconstruction_art80}
\end{figure}

\begin{remark}
The GDSE algorithm tries to minimize the modified loss function $\mathcal L_\tau$, which is a good approximation of the original loss function $\mathcal{L}$ only for high values of~$\tau$. In earlier implementations of our algorithm, $\tau$ was hard-coded and increased at every iteration to ensure that it was high enough at the termination. However, letting $\tau$ evolve automatically via GD, it turns out that the better $B$ and $r$ get, the higher $\tau$ tends to grow (Figure~\ref{f:reconstruction_art80_error_r_k}).
Indeed, from~\eqref{e:dloss} we have
\[
\begin{split}
\dfrac{\partial\mathcal{L}_\tau}{\partial \tau}=\dfrac{1}{\bigl\|B^{(0)}\bigr\|^2_{\mathrm{fro}}}\sum_{i=1}^n\sum_{j=1}^{s}2\left(f_\tau(d_{ij},r)-B^{(0)}_{ij}\right)(r-d_{ij})\,\underbrace{e^{\tau(d_{ij}-r)}f_\tau(d_{ij}, r)}_{>0}.
\end{split}
\]
The sign of each term of the double sum depends on the reconstruction quality of the embedding. Specifically, a non-negligible positive term arises whenever a node $i$, which does not belong to hyperedge $h$ in the original hypergraph, is nevertheless embedded closer than distance $r$ to $h$. As the embedding improves, such situations are expected to occur less frequently, so that ${\partial \mathcal{L}_\tau}/{\partial \tau}<0$, which in turn leads to an increase of $\tau$ in the next iteration.
\end{remark}

\looseness-1
We note that several variations and extensions to the basic algorithm in Listing~\ref{a:gdse} could be implemented; including the use of momentum and decreasing learning rates.

\subsection{Direct optimization of the embedding}

The computation of $\partial L_\tau/\partial B$ in each GDSE iteration has time complexity $O(DNns)$, together with the cost of evaluating the full spectrum of $L \in \mathbb{R}^{N \times N}$,
which is intractable for large hypergraphs. In contrast, the approach presented in this subsection is more scalable. It imposes no constraint on $Y$, allowing for faster gradient descent updates, though at the cost of losing the desirable properties of a spectral embedding.

More precisely, the \emph{Gradient-Descent Embedding} (GDE) algorithm directly optimizes the embedding matrix $Y$ via gradient descent on the approximated loss $\mathcal{L}_\tau(Y,r)$ defined in~\eqref{e:approx_loss}. As in GDSE, both $r$ and $\tau$ are also optimized by the algorithm.
It is straightforward to verify that we have
\begin{equation}
\label{e:dloss_dY}
\begin{split}
\dfrac{\partial\mathcal L_\tau}{\partial Y(u_i)}
&=2\sum_{j=1}^{s}\left(\tilde B_{ij}-B^{(0)}_{ij}\right)\dfrac{\partial f_\tau}{d_{ij}}(d_{ij}, r)\dfrac{Y(u_i)-Y(h_j)}{d_{ij}},\quad\text{for $i=1,\dots,n$}\\
\dfrac{\partial\mathcal L_\tau}{\partial Y(h_j)}&=2\sum_{i=1}^{n}\left(\tilde B_{ij}-B^{(0)}_{ij}\right)\dfrac{\partial f_\tau}{d_{ij}}(d_{ij}, r)\dfrac{Y(h_j)-Y(u_i)}{d_{ij}},\quad\text{for $j=1,\dots,s$,}
\end{split}
\end{equation}
while the derivatives with respect to $r$ and $\tau$ are the same as in Theorem~\ref{t:gdse}.

As for the choice of starting values, while the spectral embedding of $G^{(0)}$ 
typically provides a good initial guess for $Y$, a more efficient approach may be required for large hypergraphs, especially when $s \gg n$. In particular, we construct the undirected weighted graph $\mathcal{G}$ with nodes $u_1, \dots, u_n$ and adjacency matrix $B^{(0)}B^{(0)\top} \in \mathbb{R}^{n \times n}$, where the weight of the edge~$\{u, v\}$ corresponds to the number of hyperedges shared by $u$ and $v$. 
This graph is sometimes referred to as the clique expansion \cite{Benson19}.  
We use the spectral embedding of $\mathcal{G}$ to initialize the embeddings of the nodes $u_1, \dots, u_n$. Then, for each hyperedge $h_j$ we set $Y(h_j)$ as the centroid of the embeddings of the nodes it connects:
\[
Y(h_j) = \dfrac{1}{|h_j|}\sum_{u_i\in h_j} Y(u_i)
\]
We denote the resulting initial embedding as $Y^{\mathrm{c}}(\mathcal G)$.

The fast evaluation of the loss function in GDE, unlike GDSE which requires the computation of $Y^{\mathrm{sp}}(B)$, makes it feasible to
use more elaborate adaptive learning rates. In our implementation (Listing~\ref{a:gde}) we adopted the simple yet effective Armijo–Goldstein condition \cite{Armijo1966}. 
We also consider the corresponding 
\emph{stochastic gradient} approximation where the computation in~\eqref{e:dloss_dY} is restricted to fixed-size random subsets of nodes and hyperedges. 

\begin{algorithm}[caption={Gradient-Descent Embedding (GDE)}, label=a:gde]
£Given hypergraph $H^{(0)}$ with incidence matrix $B^{(0)}$, and given $r^{(0)}, \tau^{(0)}$:£

£$r, \tau\gets r^{(0)}, \tau^{(0)}$£
£$Y \gets Y^{\mathrm{sp}}\left(B^{(0)}\right)$ or $Y^{\mathrm{c}}(\mathcal G)$£

for £$k=0,\dots,\langle\text{\emph{stopping criterion}}\rangle$\\[.5ex]£
    £$\de Y, \de r, \de\tau \gets \nabla_{Y,r,\tau}\mathcal{L}_\tau\left(Y,r\right)$ (exact or stochastic)\\[.5ex]£
    £$\gamma_Y, \gamma_r, \gamma_\tau \gets \langle\text{\itshape step size selection strategy}\rangle \text{ (e.g., Armijo–Goldstein line search)}$\\[.5ex]£
    £$Y\gets Y-\gamma_Y\de Y$\\[.5ex]£
    £$r\gets r-\gamma_r\de r$\\[.5ex]£
    £$\tau\gets \tau-\gamma_\tau\de\tau$£
end

return £$Y,\,r$£
\end{algorithm}

\section{Numerical experiments and application}
\label{s:application}

\subsection{Hypergraph reconstruction}
\label{s:hypergraph_reconstruction}

We now evaluate the performance of the GSDE and GDE algorithms on the hypergraph reconstruction task: given a hypergraph $H^{(0)}$, we construct and optimize a Euclidean embedding of $H^{(0)}$ and compare the reconstructed geometric hypergraph $\tilde H$ with $H^{(0)}$, as described in Section~\ref{s:problem}.

The experiments were conducted on both synthetic and real-world hypergraphs. Specifically, \texttt{RGH1}, \texttt{RGH2}, and \texttt{RGH3} are geometric hypergraphs generated using the RGH model (Section~\ref{s:rgh}), with an increasing number of nodes and hyperedges. \texttt{RGH1} was generated from a 3-dimensional space, while the other two are based on a 6-dimensional space.
The \texttt{senate-committees} hypergraph represents members of the U.S. Senate as nodes, with hyperedges corresponding to committee memberships~\cite{chodrow2021hypergraph}.
The remaining two hypergraphs, \texttt{contact-primary-school} \cite{Stehl-2011-contact} and \texttt{contact-high-school} \cite{Mastrandrea-2015-contact}, encode student interactions, where each hyperedge represents a group of individuals who were in close proximity at a given time.

For the synthetic hypergraphs, we set the embedding dimension $D$ equal to the dimension of the space used to generate them. 
For the last two real-world hypergraphs, only GDE was tested, using the faster starting point $Y^{\mathrm{c}}(\mathcal G)$ and a stochastic  gradient computation, with a batch size of $256$. In all other cases, the full embedding $Y^{\mathrm{sp}}\left(\mathcal B^{(0)}\right)$ was used and the full gradient was computed at every iteration.

The results are summarized in Table~\ref{t:reconstruction}. Both algorithms achieve a good reconstruction error for the smaller synthetic hypergraphs \texttt{RGH1} and \texttt{RGH2}, while GDSE struggles with larger graphs. In fact, GDE outperforms GDSE in most cases, in terms of both speed and reconstruction accuracy. The slightly higher time per iteration of GDE in the smaller hypergraphs is mostly due to the time-consuming Armijo-Goldstein line search procedure for learning rate selection, which improves the convergence rate.

\begin{table}
\caption{Performance of GSDE and GDE on both synthetic and real-world hypergraphs.
}
\par\vspace{-1ex}
\label{t:reconstruction}
\small
\begin{tabular}{*7c}
\toprule
\normalsize hypergraph & \normalsize algorithm & \normalsize $D$ & \normalsize iter. & \normalsize time (s) & \normalsize time/it.\ (s) & \normalsize $\mathcal{L}$ \\
\midrule
\multirow{2}*{\begin{tabular}{c}\texttt{RGH1}\\ ($n=80$,  $s=40$)\end{tabular}} & GSDE & 3 & 800 & 25 & 0.03 & 0 \\
 & GDE & 3 & 150 & 6 & 0.04 & 0\\
\midrule
\multirow{2}*{\begin{tabular}{c}\texttt{RGH2}\\ ($n=160$,  $s=100$)\end{tabular}} & GSDE & 6 & 1000 & 184 & 0.18 & 0.033\\
& GDE & 6 & 200 & 40 & 0.20 & 0.037\\
\midrule
\multirow{2}*{\begin{tabular}{c}\texttt{RGH3}\\ ($n=300$,  $s=300$)\end{tabular}} & GSDE & 6 & 1000 & 1333 & 1.33 & 0.216\\
& GDE & 6 & 200 & 214 & 1.07 & 0.009\\
\midrule
\multirow{2}*{\begin{tabular}{c}\texttt{senate-committees}\\ ($n=282$,  $s=315$)\end{tabular}} & GSDE & 32 & 3000 & 9280 & 3.09 & 0.576 \\
 & GDE & 32 & 250 & 261 & 1.04 & 0.051 \\
\midrule
\begin{tabular}{c}\texttt{contact-high-school}\\ ($n=116$,  $s=2584$)\end{tabular} & GDE & 64 & 8000 & 9269 & 1.16 & 0.126 \\
\midrule
\begin{tabular}{c}\texttt{contact-primary-school}\\ ($n=242$,  $s=12704$)\end{tabular} & GDE & 64 & 8000 & 9242 & 1.16 & 0.227 \\
\bottomrule
\end{tabular}
\end{table}

\begin{figure}
\centering
\begin{tikzpicture}
\begin{semilogyaxis}[
    xlabel={iter.},
    ylabel={$\mathcal L_\tau$},
    ylabel style = {rotate = -90},
    width=6.7cm,
    height=5cm,
    xticklabel style={
        font=\small,
    },
    yticklabel style={
        font=\small,
        /pgf/number format/fixed,
    },
    legend style={font=\footnotesize},
    legend pos = outer north east,
]
\addplot [very thick, color=mplblue] table [y expr=\thisrowno{0}, x expr=\coordindex] {figures/reconstruction_sen4_GDE_armijo_loss_list.txt};
\addplot [very thick, color=mplorange] table [y expr=\thisrowno{0}, x expr=\coordindex] {figures/reconstruction_sen8_GDE_armijo_loss_list.txt};
\addplot [very thick, color=mplgreen] table [y expr=\thisrowno{0}, x expr=\coordindex] {figures/reconstruction_sen16_GDE_armijo_loss_list.txt};
\addplot [very thick, color=mplred] table [y expr=\thisrowno{0}, x expr=\coordindex] {figures/reconstruction_sen32_GDE_armijo_loss_list.txt};
\addlegendentry{$D=4$}
\addlegendentry{$D=8$}
\addlegendentry{$D=16$}
\addlegendentry{$D=32$}
\end{semilogyaxis}
\end{tikzpicture}\\[-2ex]
\caption{Performance of GDE on \texttt{senate-committees} in different embedding dimensions.}
\label{f:reconstruction_sen_dims}
\end{figure}
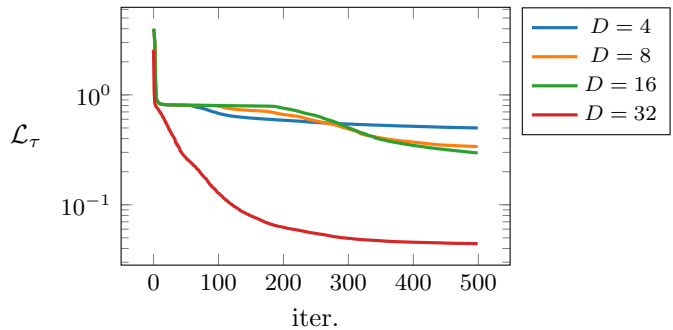

Given that the algorithms are effective at reproducing synthetic geometric hypergraphs, 
in a real-world setting we may view them as measuring the 
extent to which geometric stucture is present in the data.
In Table~\ref{t:reconstruction} we see that 
GDE achieves a reconstruction error of around $5$\,\% for the \texttt{senate-committees} hypergraph 
and around 
$13$\,\% for the \texttt{contact-high-school} hypergraph,
while the value increases to around
$23$\,\% for the \texttt{contact-primary-school} hypergraph. Figure~\ref{f:reconstruction_sen_dims} shows that 
lower-dimensional embeddings yield significantly worse results.

In the following subsections, we present two possible applications of geometric hypergraph embedding. Although the ideas are applicable to both GDSE and GDE, we focus on the latter for simplicity.

\subsection{Detection of spurious or missing nodes in hyperedges}

Real-world hypergraph data may contain errors; notably spurious or missing nodes in hyperedges. Assuming geometric structure in the ground-truth hypergraph, geometric embeddings can be leveraged to identify such errors.

We first focus on spurious nodes. Let $H^+$ be a modified version of a hypergraph~$H$ that has the same nodes as~$H$, but with some ``enlarged'' hyperedges obtained by adding random nodes.
Given~$H^+$ we aim to recover the original hypergraph $H$.
In the geometric embedding of $H^+$, difficulty in satisfying the proximity condition between a node $u_i$ and a hyperedge $h_j$ may suggest that the corresponding connection in $H^+$ is spurious. Recall that the quantity $[\tilde B_\tau]_{ij}$, defined in~\eqref{e:approx_B}, reflects how close the embeddings of $u_i$ and $h_j$ are. If $u_i\in h_j$ in $H^+$ but $[\tilde B_\tau]_{ij}$ is below a certain threshold, $\alpha$, we may therefore classify $u_i$ as a spurious node of $h_j$.

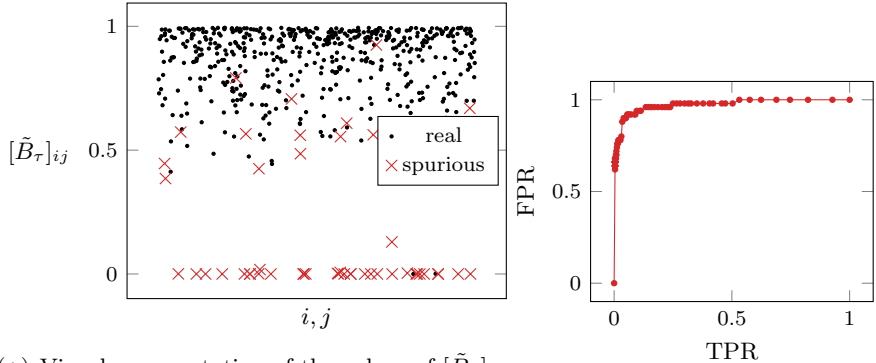
\begin{figure}
\centering
\begin{subfigure}{.53\textwidth}
\begin{tikzpicture}
\begin{axis}[
    width=.98\textwidth, %ymin=-1, ymax=1,
    height=5.5cm,
    xlabel={\small $i,j$},% such that $u_i\in h_j$ in $H^+$},
    ylabel={\small $[\tilde B_\tau]_{ij}$},
    ylabel style = {rotate = -90},
    xtick=\empty,
    yticklabel style={
        font=\footnotesize,
        /pgf/number format/fixed,
    },
    legend style={at={(.98,0.5)}, anchor=east, font=\footnotesize}]
    
\addplot[black, only marks, mark=*, mark size=.6pt] table {figures/errcorr_art80_green.txt};
\addplot[mplred, only marks, mark=x, mark size=3pt] table {figures/errcorr_art80_blue.txt};
\addlegendentry{real}
\addlegendentry{spurious}
\end{axis}
\end{tikzpicture}%
\caption{Visual representation of the values of $[\tilde B_\tau]_{ij}$, for any $i,j$ such that $u_i\in h_j$ in $H^+$.}
\label{f:errcorr_art_b_ij}
\end{subfigure}%
\begin{subfigure}{.45\textwidth}
\begin{tikzpicture}
\begin{axis}[
    width=5.3cm, %ymin=-1, ymax=1,
    height=4.5cm,
    % axis equal image,
    xlabel={\small TPR},
    ylabel={\small FPR},
    yticklabel style={font=\footnotesize},
    xticklabel style={font=\footnotesize},
    legend style={at={(0.4,1.1)}, anchor=south, font=\footnotesize}]
    
\addplot[mplred, mark=*, mark size=1pt] table {figures/errcorr_art80_added_roc.txt};
% \addlegendentry{spurious edges ($\text{AUC}=0.94$)}
\end{axis}
\end{tikzpicture}
\caption{ROC  curve ($\mathrm{AUC} = 0.973$).
}
\label{f:roc}
\end{subfigure}%
\caption{Adoption of geometric embedding for detection of $50$ spurious node-hyperedge relations in \texttt{RGH1}.}
\label{f:errcorr_art}
\end{figure}

\begin{table}
\caption{Performance of GDE in detecting spurious or missing node-hyperedge relations in different hypergraphs.}\par\vspace{-1ex}
\label{t:missing-spurious}
\begin{tabular}{*7c}
\toprule
hypergraph & $D$ & iter. & AUC (spurious) & AUC (missing) \\
\midrule
\texttt{RGH1} & 3 & 50 & 0.973 & 0.977 \\
\texttt{RGH2} & 6 & 50 & 0.935 & 0.870 \\
\texttt{RGH3} & 6 & 50 & 0.956 & 0.959 \\
\texttt{senate-committees} & 32 & 100 & 0.908 & 0.874 \\
\bottomrule
\end{tabular}
\end{table}

Figure~\ref{f:errcorr_art} shows the results of this approach applied to a modified version of \texttt{RGH1}, in which $50$ spurious node-hyperedge relations were added at random---these are marked with a red cross in Figure~\ref{f:errcorr_art_b_ij}. 
Many of these can be easily identified, as they correspond to entries where $[\tilde B_\tau]_{ij}\approx0$. 
For example, with $\alpha = 0.4$, $34$ out of $50$ spurious connections are correctly identified, while only two legitimate connections are misclassified. 

To judge the overall performance, we show 
in Figure~\ref{f:errcorr_art} the Receiver Operator Characteristic (ROC) curve, 
 with threshold parameter $\alpha$. The area under the curve (AUC) is $0.973$. 
Here, TPR is the true positive rate and 
FPR is the false positive rate, 
and an AUC of one indicates perfect reconstruction, \cite{auc97}.

A similar analysis can be done for \emph{missing} nodes in hyperedges. In this case, we may search for high values of $[\tilde B_\tau]_{i,j}$ among the indices $i,j$ such that $u_i\not\in h_j$ in~$H^+$. Table~\ref{t:missing-spurious} summarizes the performance of GDE in identifying both spurious and missing node–hyperedge relations, when $50$ such relations are respectively added or removed.

\subsection{Community detection}

We now consider the use of geometric embedding for community detection in hypergraphs. We conducted our experiments on the student-interaction hypergraphs described in Section~\ref{s:hypergraph_reconstruction}, which include ground-truth community labels: $11$ for the former and $9$ for the latter. We applied the GDE algorithm, setting the embedding dimension to $16$. After each iteration, the nodes were clustered by applying K-means to their embeddings. Figure~\ref{f:clustering} shows the evolution of the Adjusted Rand Index (ARI) over the iterations.
The ARI score can range from $-1$ to $1$, with $1$ indicating perfect agreement and~$0$ being equivalent random assignment \cite{ARI85}. Since K-means is highly sensitive to initialization, we report only the best result out of $50$ runs for each GDE iteration.

Although GDE requires thousands of iterations to achieve a low reconstruction error on the student-interaction hypergraphs (Table~\ref{t:reconstruction}), it turns out that a few dozen iterations are sufficient to significantly improve the clustering accuracy. The results compare favourably with  \cite{HypergraphEmbeddingNature}, where an ARI below $0.5$ is reported for both hypergraphs.

\begin{figure}
\centering
\begin{subfigure}{.5\textwidth}
\begin{tikzpicture}
\begin{axis}[
    xlabel={\small GDE iter.},
    ylabel={\small ARI},
    width=6.5cm,
    height=4cm,
    xticklabel style={
        font=\small,
    },
    yticklabel style={
        font=\small,
        /pgf/number format/fixed,
    },
]
\addplot [very thick, color=mplblue] table [y expr=\thisrowno{0}, x expr=\coordindex] {figures/clustering_primary_school_ari_list.txt};
\end{axis}
\end{tikzpicture}
\caption{\texttt{contact-primary-school}}
\end{subfigure}%
\begin{subfigure}{.5\textwidth}
\begin{tikzpicture}
\begin{axis}[
    xlabel={\small GDE iter.},
    ylabel={\small ARI},
    width=6.5cm,
    height=4cm,
    xticklabel style={
        font=\small,
    },
    yticklabel style={
        font=\small,
        /pgf/number format/fixed,
    },
]
\addplot [very thick, color=mplblue] table [y expr=\thisrowno{0}, x expr=\coordindex] {figures/clustering_high_school_ari_list.txt};
\end{axis}
\end{tikzpicture}
\caption{\texttt{contact-high-school}}
\end{subfigure}%
\caption{Evolution of ARI (best of $50$ K-means runs) over GDE iterations on student-interaction hypergraphs.}
\label{f:clustering}
\end{figure}
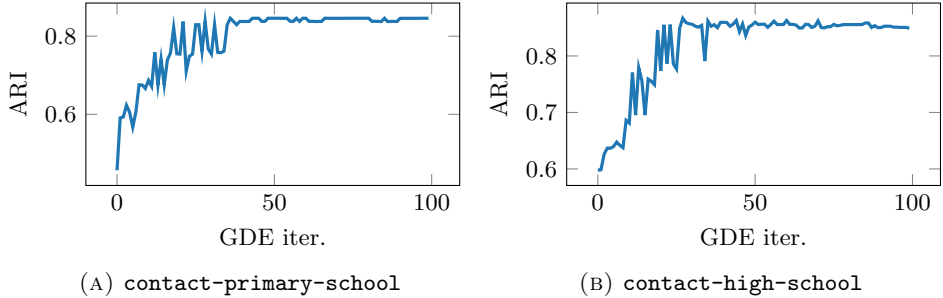

\section{Conclusions and future work}
\label{s:conclusion}

Spectral methods for node embedding have a long tradition in graph theory.
Here we developed new spectral methods that apply to hypergraphs. Our key assumption was that the connectivity structure arises via a geometric 
mechanism---hyperedges are formed by closeness to
(unknown) hyperedge centres. This allowed us to
define a goodness-of-fit for an embedding. We began by considering a spectral embedding based on the graph Laplacian associated with the bipartite graph representation of the hypergraph. We then showed how the embedding can be improved by considering a smoothed version of the goodness-of-fit and allowing the bipartite weights and the connectivity radius to be 
iteratively refined using gradient descent. 
We derived the Gradient-Descent Spectral Embedding (GDSE) algorithm, see Listing~\ref{a:gdse}, which computes a new spectral
embedding at each iteration. 
We also derived the Gradient Descent Embedding (GDE) algorithm, see Listing~\ref{a:gdse}, which
relaxes the requirement that the final embedding should be spectral and is practical for large, sparse hypergraphs.
We then showed that GDSE and GDE are effective at recovering the geometric make-up of synthetically generated data. On real hypergraphs, in addition to quantifying the level of geometric structure, the algorithms were shown to be useful for the downstream tasks
of predicting missing or spurious hyperedge memberships and community detection. 

We note that the loss function introduced here can be identified with a likelihood function for the geometric random hypergraph model in 
\cite{RandomGeometricHypergraph2025}. It is feasible to introduce other random hypergraph models, for example 
with hyperedge membership arising independently
at random, or driven by a periodic distance measure, and thereby to use likelihood ratios to compare modeling hypotheses, as in \cite{HypergraphEmbeddingNature,GrindrodPeter2010Pr}.

The choice of embedding dimension is also a key ingredient. In this work, we used the heuristic approach of examining the loss function over a range of possibilities. A more systematic alternative would be to extend the two-nearest neighbour dimension estimation algorithm from 
\cite{Facco17}, which has proved useful in the graph setting \cite{GHKstruggle24}.

\section*{Data Statement}
The relevant Python code used for the experiments presented here is available in a GitHub repository \texttt{\href{https://github.com/francesco-zigliotto/hypergraph-embedding/blob/main/hypergraph_embedding.ipynb}{francesco-zigliotto/hypergraph-embedding}}.

\section*{Acknowledgements}
FZ acknowledges funding from INdAM-GNCS (Project ``Metodi basati su matrici e tensori strutturati per problemi di algebra lineare di grandi dimensioni'') and from MUR (Ministero dell'Universit\`a e della Ricerca) through the PRIN Project 20227PCCKZ (``Low Rank Structures and Numerical Methods in Matrix and Tensor Computations and their Applications'').
DJH was supported by a fellowship from the Leverhulme Trust. 

\section*{\refname}
\nocite{*}
\hfuzz2pt
\printbibliography[heading=none]

\end{document}